\newtheorem{lemma}{Lemma}
\newcommand{\firstrunpaper}{Hensen \emph{et.al.}\cite{Hensen2015LoopholeFree}}
\newcommand{\pvalue}{$P$-value}
\newcommand{\pvalues}{$P$-values}
\newcommand{\TT}{attempt}
\newcommand{\TTs}{attempts}
\newcommand{\h}{h}
\newcommand{\HH}{H}
\newcommand{\HHTot}{H}
\newcommand{\optTilde}[1]{#1}
\newcommand{\NNN}{{m}}
\newcommand{\N}{m}
\newcommand{\ellInd}{i}
\newcommand{\LHVMs}{\textrm{LHVM} }
\newcommand{\LHVM}{\textrm{LHVM}}
\newcommand{\indep}{\rotatebox[origin=c]{90}{$\models$}}
\newcommand{\pr}[1]{\textrm{\textnormal{Pr}}\left(#1\right)}
\newcommand{\E}[1]{\mathbb E\left[#1\right]}
\newcommand{\Fx}{Q^A}
\newcommand{\Fy}{Q^B}
\newcommand{\Bx}{F^A}
\newcommand{\By}{F^B}
\newcommand{\bellwin}{\beta_{\rm win}}
\def\p{\textrm{Pr}}
\title{Loophole-free Bell test using electron spins in diamond: second experiment and additional analysis}
\author[1,2]{B.~Hensen}
\author[1,2]{N.~Kalb}
\author[1,2]{M.S.~Blok}
\author[1,2]{A.E.~Dr\'{e}au}
\author[1,2]{A.~Reiserer}
\author[1,2]{R.F.L.~Vermeulen}
\author[1,2]{R.N.~Schouten}
\author[3]{M.~Markham}
\author[3]{D.J.~Twitchen}
\author[1]{K.~Goodenough}
\author[1]{D.~Elkouss}
\author[1]{S.~Wehner}
\author[1,2]{T.H.~Taminiau}
\author[1,2,*]{R.~Hanson}
\affil[1]{QuTech, Delft University of Technology, P.O. Box 5046, 2600 GA Delft, The Netherlands}
\affil[2]{Kavli Institute of Nanoscience Delft, Delft University of Technology, P.O. Box 5046, 2600 GA Delft, The Netherlands}
\affil[3]{Element Six Innovation, Fermi Avenue, Harwell Oxford, Didcot, Oxfordshire OX110QR, United Kingdom}
\affil[*]{r.hanson@tudelft.nl}
\begin{abstract}
The recently reported violation of a Bell inequality using entangled electronic spins in diamonds (Hensen \emph{et. al.}, \emph{Nature} 526, 682-686) provided the first loophole-free evidence against local-realist theories of nature. Here we report on data from a second Bell experiment using the same experimental setup with minor modifications. We find a violation of the CHSH-Bell inequality of $2.35 \pm 0.18$, in agreement with the first run, yielding an overall value of $S = 2.38 \pm 0.14$. We calculate the resulting $P$-values of the second experiment and of the combined Bell tests. We provide an additional analysis of the distribution of settings choices recorded during the two tests, finding that the observed distributions are consistent with uniform settings for both tests. Finally, we analytically study the effect of particular models of random number generator (RNG) imperfection on our hypothesis test. We find that the winning probability per trial in the CHSH game can be bounded knowing only the mean of the RNG bias, implying that our experimental result is robust for any model underlying the estimated average RNG bias.
\end{abstract}
\begin{document}

\flushbottom
\maketitle

\thispagestyle{empty}

\section*{Introduction}
Ever since its inception, the counterintuitive predictions of quantum theory have stimulated debate about the fundamental nature of reality. In 1964, John Bell found that the correlations between outcomes of distant measurements allowed under local realism\cite{Einstein1935Can} are strictly bounded, while certain quantum mechanical states are predicted to violate this bound\cite{Bell1964On}. Numerous violations of a Bell inequality in agreement with quantum theory have been reported\cite{Freedman1972Experimental,Aspect1982Experimental,Tittel1998Violation,Weihs1998Violation,Rowe2001Experimental,Matsukevich2008Bell,Ansmann2009Violation,Scheidl2010Violation,Hofmann2012Heralded,Pfaff2013Demonstration,Giustina2013Bell,Christensen2013DetectionLoopholeFree,Ballance2015Hybrid,Dehollain2016BellS}. However, due to experimental limitations additional assumptions were required in all experiments up to 2015 in order to reject the local-realist hypothesis, resulting in loopholes. Last year we reported the first experimental loophole-free violation of the CHSH-Bell inequality using entangled electron spins associated with nitrogen-vacancy (NV) centers in diamond, separated by 1.3~km~\cite{Hensen2015LoopholeFree}. Less than three months after our experiment, two groups observed violations of the CH-Eberhard inequality on spatially-separated photons~\cite{Giustina2015SignificantLoopholeFree,Shalm2015Strong} and before the end of the year first signatures of a CHSH-Bell violation on single Rubidium atoms were found\cite{Weinfurter2016Bell}.

Below, we report on data from a second loophole-free Bell test performed with the same setup as in \firstrunpaper. Additionally, we analyse in detail the recorded distribution of settings choices in both the first and second datasets. Finally, we investigate the effect of arbitrary models underlying the bias in the random number generation.

\section*{Second run}
\label{sec:second_run}
After finishing the first loophole-free Bell experiment in July 2015, both the A(lice) and B(ob) setups were modified and used in various local experiments. In December 2015, we rebuilt the Bell setup for performing a second run of the Bell test, with three small modifications compared to the first run: 

First, we add a source of classical random numbers for the input choices. A random basis choice is now made by applying an XOR operation between a quantum random bit generated as previously~\cite{Abellan2014UltraFast,Mitchell2015Strong,Abellan2015Generation} and classical random bits based on Twitter messages, as proposed by Pironio~\cite{Pironio2015Random}. In particular, we generate two sets of classical random numbers, one for the basis choice at A, and one for the basis choice at B (see details in the following sections). At each location, 8~of these bits are fed into an FPGA. Just before the random basis rotation, the 8~Twitter bits and 1~quantum random bit are combined by subsequent XOR operations. The resulting bit is used as the input of the same microwave switch as used in the first run\cite{Hensen2015LoopholeFree}. The XOR operation takes 70~ns of additional time, shifting the start of the readout pulse to a later time by the same amount. We leave the end of the readout window unchanged, resulting in the same locality conditions as in the first test.

We note that the Twitter-based classical random bits by themselves cannot close the locality loophole: the raw data is available on the Internet well before the trials and the protocol to derive the bits is deterministic and programmed locally. The only operations that are performed in a space-like separated manner are the XOR operations between 8 stored bits. Therefore, strictly speaking only the quantum-RNG is providing fresh random bits. Since a loophole-free Bell test is described solely by the random input bit generation and the outcome recording at A and B (and in our case the event-ready signal recording at C), the second run can test the same null hypothesis as the first run as these events are unchanged. That being said, the use of the Twitter-based classical randomness puts an additional constraint on local-hidden-variable models attempting to explain our data.

Second, we set larger (i.e. less conservative) heralding windows at the event-ready detector in order to increase the data rate compared to the first experiment. We start the heralding window about 700~picoseconds earlier, motivated by the data from the first test (grey line in Fig.~\ref{fig:bell_filter_dependence}). We predefine a window start of 5426.0~ns after the sync pulse for channel~0, and 5425.1~ns for channel 1. We set a window length of 50~ns.

Finally, we also use the $\psi^+$- Bell state, which is heralded by two photo-detection events in the \emph{same} beamsplitter output arm at the event-ready station. In general the fidelity of this Bell state is lower than that of $\psi^-$ due to detector after-pulsing \cite{Bernien2013Heralded} (note that for $\psi^-$ the after-pulsing is not relevant because $\psi^-$ is heralded by photo detection events in \emph{different} beamsplitter output arms). However, we found the after-pulsing effect to be small enough for the detectors used in this run. We set an adapted window length of the second window of 4~ns and 2.5~ns for channels~0,~1 respectively, where the exponentially decaying NV emission is still large relative to the after-pulsing probability. As described below, we can combine the $\psi^-$-related and $\psi^+$-related Bell trials into a single hypothesis test\cite{Elkouss2015Nearly}.

Apart from these modifications, all settings, analysis software, calibrations and stabilisation routines were identical to those in the first run\cite{Hensen2015LoopholeFree}.

\subsection*{Random numbers from Twitter}
\label{sec:bell_twitter_rnd}
After each potential heralding event (corresponding to the $E$-events described in the Supplementary Information of \firstrunpaper), both at location A and B we take 8 new bits from a predefined random dataset (one for A and one for B) based on Twitter messages, to send to the FPGA-based random-number combiner.

The random dataset for A was obtained by collecting 139952 messages from Twitter trending topic with hash-tag \#2DaysUntilMITAM, starting from 14:47:58 November 11th, 2015. The messages were collected using the Python Tweepy-package (www.tweepy.org). Only the actual message text was used (no headers), consisting of at most 140 Unicode characters. From each message a single bit was obtained by first converting each character into an integer representing its Unicode code point, converting the integer to the smallest binary bit-string representing that number and finally taking the parity of all the resulting bit-strings together (even or uneven number of ones). The dataset for B was similarly obtained from 134501 messages with the hash-tag \#3DaysTillPURPOSE, streamed prior to the dataset A, starting from 16:52:44 November 10th, 2015.
\begin{figure}[hbt]
	\centering
	\includegraphics[width=83mm]{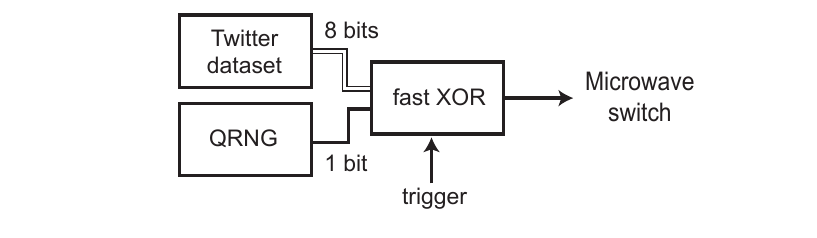}
	\caption{ \label{fig:bell_rnd_twitter} Schematic of random input bit generation by combining bits from a quantum random number generator (QRNG) and classical random bits from a dataset based on Twitter messages.}
\end{figure}

We note that although one may expect the Justin~Bieber and One~Direction fan-bases to be sufficiently disjoint to produce an uncorrelated binary dataset, the hashtag from dataset B featured in 2 out of 139952 tweets of dataset A, and vice-versa in 4 out of 134501 tweets. Still, a Fisher-exact independence test of A (first 134501 bits) and B's dataset results in a \pvalue\ of 0.63. The bias of the 8-bit parity sets were $0.44\%$ and $0.95\%$, with statistical uncertainty ($\frac{1}{2\sqrt{N}}$) of $0.38\%$ and $0.39\%$ for A,B respectively. As these bits are XOR'ed with bits from the quantum random number generator with much smaller bias, this has no expected effect on the bias in the used input settings. Finally, we characterized the performance of the FPGA combiners, which showed no errors on $10^8$ XOR operations.

\subsection*{APD replacement}
\label{sec:bell_apd_replacement}
After 5 days of measurement, the APD at location C corresponding to channel~0 broke down during the daily calibration routine and was subsequently replaced. To take into account the changed detection-to-output delay for the event-ready filter settings, the laser pulse arrival time was recorded for the new APD before proceeding. We adapted the start of the event-ready window for channel~0 accordingly, and used this for all the data taken afterwards.

\subsection*{Joint \pvalue~for $\psi^-$ and $\psi^+$ heralded events}
\label{sec:bell_joint_pvalue_psi_plus}
Here we expand the statistical analysis used for the first run\cite{Hensen2015LoopholeFree} to incorporate the $\psi^-$ and $\psi^+$ events into one hypothesis test. For each of these states we perform a different variant of the CHSH game, and then use the methods of Elkouss and Wehner\cite{Elkouss2015Nearly} to combine the two: The output signal of the ``event-ready''-box  $\mathbf{\optTilde{t}}^{\NNN} = (\optTilde{t}_\ellInd)_{\ellInd=1}^\NNN$ now has three possible outcomes, where the tag $\optTilde{t}_\ellInd=0$ still corresponds to a failure (no, not ready) event. We now distinguish two different successful preparations of the boxes A and B: $\optTilde{t}_\ellInd=-1$ corresponds to a successful preparations of the $\psi^-$ Bell state, and $\optTilde{t}_\ellInd=+1$ to a $\psi^+$ Bell state. In terms of non-local games, Alice and Bob are playing two different games, where in case $\optTilde{t}_\ellInd=-1$ they must have $(-1)^{\optTilde{a}_{\ellInd} \optTilde{b}_{\ellInd}} \optTilde{x}_{\ellInd}\optTilde{y}_{\ellInd} = 1$ in order to win, and in case of  $\optTilde{t}_\ellInd=+1$ they must have $(-1)^{\optTilde{a}_{\ellInd} (\optTilde{b}_{\ellInd}\oplus1)} \optTilde{x}_{\ellInd}\optTilde{y}_{\ellInd} = 1$ to win. Note that both games have the same maximum winning probabilities. This means that we can take $k:= k_- + k_+$, with $k_-$ the number of times $(-1)^{\optTilde{a}_{\ellInd} \optTilde{b}_{\ellInd}} \optTilde{x}_{\ellInd}\optTilde{y}_{\ellInd} = 1$, and $k_+$ the number of times $(-1)^{\optTilde{a}_{\ellInd} (\optTilde{b}_{\ellInd}\oplus1)} \optTilde{x}_{\ellInd}\optTilde{y}_{\ellInd} = 1$; the remainder of the analysis remains the same and in particular the obtained bound to the \pvalue\ is unchanged (see Elkouss and Wehner\cite{Elkouss2015Nearly}, page 20).
We then have for the adapted CHSH function (see Supplementary Information of \firstrunpaper):
\begin{equation}
\label{eq:bell_total_k_adapted}
k' :=
\sum_{\ellInd=1}^\NNN |\optTilde{t}_\ellInd| \cdot \frac{(-1)^{\optTilde{a}_\ellInd (\optTilde{b}_\ellInd+\frac{\optTilde{t}_\ellInd+1}{2})}\optTilde{x}_\ellInd \optTilde{y}_\ellInd+1}{2}\ ,
\end{equation}
and adapted total number of events then becomes: 
\begin{align}
\label{eq:bell_total_n_adapted}
n' &:= |\optTilde{\mathbf{t}}^{\NNN}| = \sum_{\ellInd=1}^\NNN |\optTilde{t}_\ellInd|.
\end{align}

\subsection*{Results}
In this test we set the total number of Bell trials $n_2=300$. After 210 hours of measurement over 22 days during 1 month, we find $S_2=2.35 \pm 0.18$, with $S_2$ the weighted average of $S_{\psi^-} =\left< x \cdot y \right>_{\left(0,0\right)} + \left< x \cdot y \right>_{\left(0,1\right)} 
    + \left< x \cdot y \right>_{\left(1,0\right)} - \left< x \cdot y \right>_{\left(1,1\right)}$ for $\psi^-$ heralded events (different detectors clicked), and $S_{\psi^+} =\left< x \cdot y \right>_{\left(0,0\right)} + \left< x \cdot y \right>_{\left(0,1\right)} 
    - \left< x \cdot y \right>_{\left(1,0\right)} + \left< x \cdot y \right>_{\left(1,1\right)}$ for $\psi^+$ (same detector clicked). See Figure \ref{fig:bell_new_data}. 
    
This yields a \pvalue\ of 0.029 in the conventional analysis~\cite{Hensen2015LoopholeFree} (which assumes independent trials, perfect random number generators and Gaussian statistics), and with $k_2 = 237$ a \pvalue\ of 0.061 in the complete analysis~\cite{Hensen2015LoopholeFree} (which allows for arbitrary memory between the trial, partially predictable random inputs and makes no assumptions about the probability distributions).

\begin{figure}[htb]
	\centering
	\includegraphics[width=83mm]{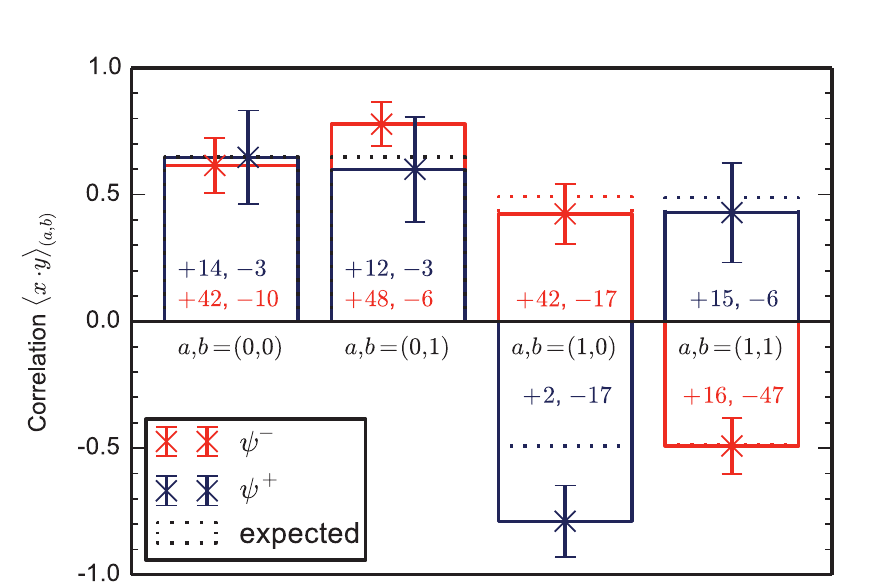}
	\caption{\label{fig:bell_new_data} \textbf{Second loophole-free Bell test results.} \textbf{(a)} Summary of the data and the CHSH correlations. We record a total of $n_2=300$ trials of the Bell test. Dotted lines indicate the expected correlation based on the spin readout fidelities and the characterization measurements presented in \firstrunpaper. Shown are data for both $\psi^-$ heralded events (red, two clicks in different APD's at location C), and for $\psi^+$ heralded events (blue, two clicks in the same APD). Numbers in bars represent the amount of correlated and anti-correlated outcomes respectively, for $\psi^-$ (red) and $\psi^+$ (orange). Error bars shown are $\sqrt{\left(1-\left< x \cdot y \right>_{(a,b)}^2\right)/n_{(a,b)}}$, with $n_{(a,b)}$ the number of events with inputs $(a,b)$. }
\end{figure}

\section*{Combined \pvalue~for the two tests}
\label{sec:combining_datasets}
We now turn to analysing the statistical significance of the two runs combined. Extending the conventional analysis, we take the weighted sum of the CHSH parameters obtained for both tests to find $S_{\mathrm{combined}}=2.38\pm0.136$, yielding a \pvalue\ of 2.6$\cdot 10^{-3}$.
For the complete analysis, we give here two cases. The first case is where the tests are considered to be fully independent; we can then combine the \pvalues\ using Fisher's method, resulting in a joint \pvalue\ of $1.7 \cdot 10^{-2}$  for the complete analysis. In the second case we consider the two runs as a single test; we can then combine the data, and find $k_1+k_2= 433$ for $n_1+n_2= 545$, resulting in a joint \pvalue\ of 8.0$\cdot 10^{-3}$ for the complete analysis. We emphasize that these are extreme interpretations of a subtle situation and these \pvalues\ should be considered accordingly.

Although the predefined event-ready filter settings were used for the hypothesis tests presented, the datasets recorded during the Bell experiments contain all the photon detection times at location C. This allows us to investigate the effect of choosing different heralding windows in post-processing. Such an analysis does not yield reliable global \pvalues\ (look-elswehere effect), but can give insight in the physics and optimal parameters of the experiment. In Fig.~\ref{fig:bell_filter_dependence} we present the dependence of the recorded Bell violation $S$, and number of Bell trials $n$, if we offset the start of the windows. For negative offsets, photo-detection events caused by reflected laser light starts to play an important role, and as expected the Bell violation decreases since the event-ready signal is in that regime no longer a reliable indicator of the generation of an entangled state. The observed difference between the runs in offset times at which the laser reflections start to play a role are caused by the less aggressive filter settings in the second run. However, we see that in both runs the $S$-value remains constant up a negative offset of about 0.8 ns, indicating that the filter settings were still chosen on the conservative side.

\begin{figure*}[tb]
	\centering
	\includegraphics[width=134mm]{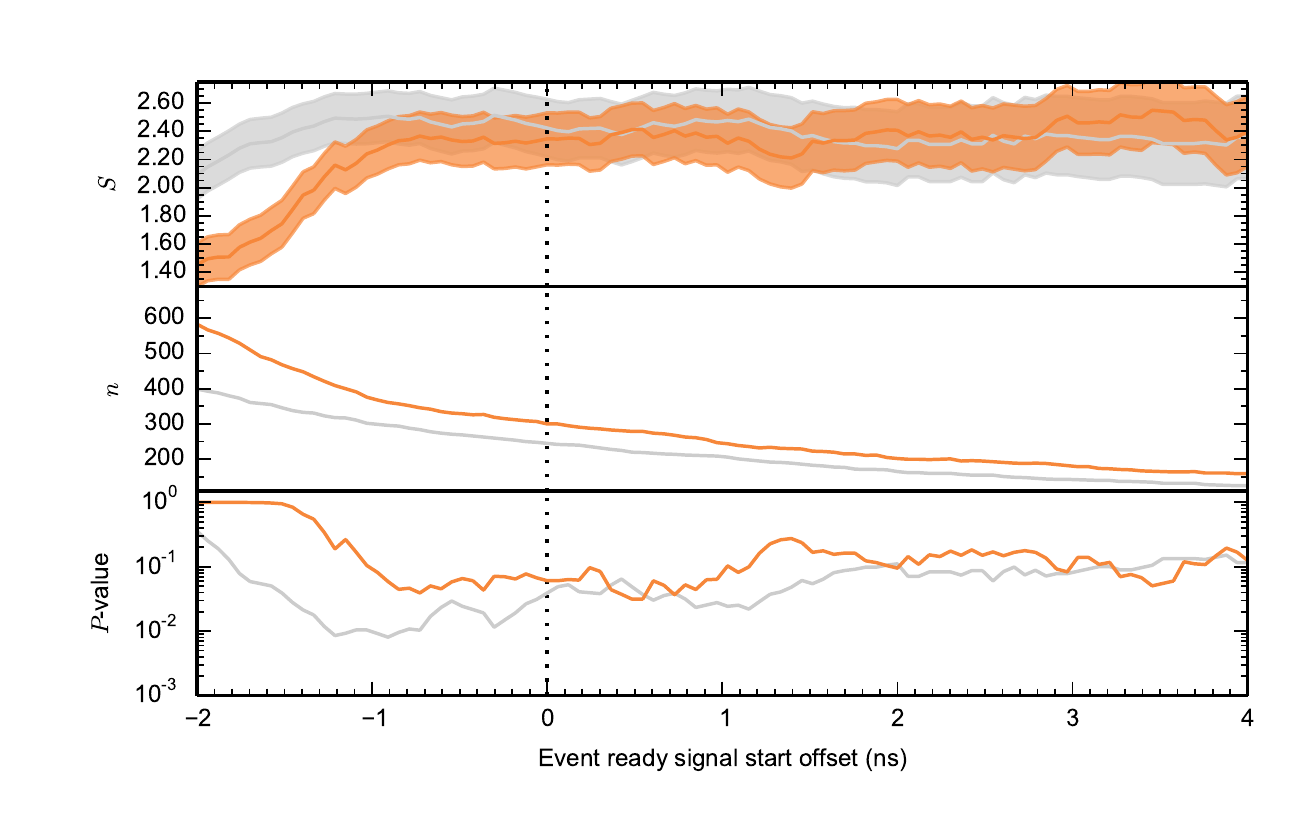}
	\caption{ \label{fig:bell_filter_dependence}
	CHSH parameter $S$, number of Bell trials $n$, and post-selected complete-analysis local \pvalue\ versus window start offset for the event-ready photon detections at location C, for the first (grey) and second (orange) dataset.
	The time-offset shown is with respect to the predefined windows (corresponding to the dotted line). Confidence region shown is one sigma, calculated according to the conventional analysis. Shifting the window back in time, the relative fraction of heralding events caused by photo-detection from laser reflections increases, thereby reducing the observed Bell violation.}
\end{figure*}

\section*{Statistical analysis of settings choices}
\label{sec:settings_choice_analysis}
Both for the Bell run in \firstrunpaper\ and for the Bell run presented above, we are testing a single well-defined null hypothesis formulated before the experiment, namely that a local-realist model for space-like separated sites could produce data with a violation at least as large as we observe. The settings independence is guaranteed by the space-like separation of relevant events (at stations A, B and C). Since no-signalling is part of this local-realist model, there is no extra assumption that needs to be checked in the data. We have carefully calibrated and checked all timings to ensure that the locality loophole is indeed closed.

Nonetheless, one can still check (post-experiment) for many other types of potential correlations in the recorded dataset if one wishes to. However, since now many hypotheses are tested in parallel, \pvalues\ should take into account the fact that one is doing multiple comparisons (the look-elsewhere effect, LEE). Failure to do so can lead to too many false positives, an effect well known in particle physics. In contrast, there is no LEE for a single pre-defined null hypothesis as in our Bell test.

Formulation and testing of multiple hypotheses can result in obtaining almost arbitrarily low local  \pvalues, which may have almost no global significance. As an example, recalculating the \pvalue\ for the local realist hypothesis, given the first dataset for a window start offset of -900 picoseconds compared to the predefined window starts, results in a local \pvalue\ of 0.0081 using the complete analysis (see Fig.~\ref{fig:bell_filter_dependence}). Taking this to the extreme by doing a search of the window start offsets for both channels independently and the joint window length offset, results in a local \pvalue\ of 0.0018. These examples clearly illustrate that without taking into account that multiple hypotheses are being tested, such local \pvalues\ can not be used to assign significance.

With these considerations in mind we analyse the settings choices in the two sub-sections below.

\subsection*{Settings choices in the first and second dataset}

The distribution of the $245$ input settings in the first dataset (see Figure 4a in \firstrunpaper) is $(n_{(0,0)},n_{(0,1)},n_{(1,0)},n_{(1,1)}) = (53,79,62,51)$, with $n_{(a,b)}$ the number times the inputs $(a,b)$ were used. This realisation looks somewhat unbalanced for a uniform distribution, and one could be motivated to test the null hypothesis that the RNGs are uniform. Performing a Monte-Carlo simulation of $10^5$ realisations of a uniform multinomial distribution with size $n=245$ we find a local \pvalue\ of 0.053 to get such a distribution or more extreme. We can get further insight by looking at all the setting choices recorded during the test. Around every potential heralding event about 5000 settings are recorded, for which we find a local \pvalue\ of 0.57 (Table \ref{tbl:pvalues_settings}), consistent with a uniform setting distribution. 

Many additional tests can be performed on equally many slices or subsets of the data, where one or more of the filters (see Supplementary Information of \firstrunpaper) is relaxed.
In Table~\ref{tbl:pvalues_settings} we list the individual (local) \pvalues\ for a set of 4 hypotheses regarding the settings choices, for both the first and second dataset.
\begin{enumerate}
\item RNG A is uniform 
\item RNG B is uniform
\item RNG A and RNG B are jointly uniform
\item Fisher's exact test~\cite{Fisher1922On} for $n<5000$, Pearson's $\chi^2$ test~\cite{Pearson1900X} for $n>5000$)
\end{enumerate}
For tests 1 and 2 we evaluate a two tailed binomial test with equal success probability. For test 3 we perform a Monte-Carlo simulation of $10^5$ realisations of a uniform multinomial distribution with size fixed to the number of observations in that particular row, i.e. $n=245$ for the second row in Table~\ref{tbl:pvalues_settings}.

\begin{table*}
	\centering
	\includegraphics[width=\textwidth]{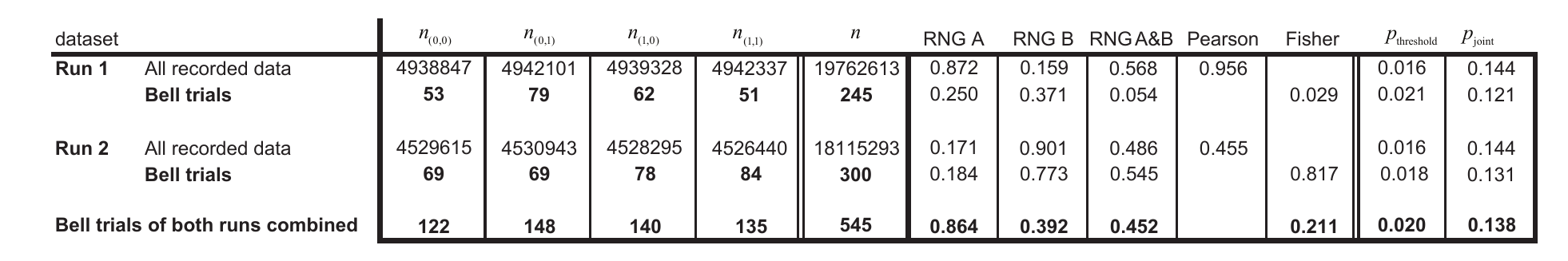}

  \caption{\label{tbl:pvalues_settings}From left to right each column corresponds with: dataset on which statistics are computed, local \pvalue\ for the null hypothesis RNG A is uniform, local \pvalue\ for the null hypothesis RNG B is uniform, local \pvalue\ for the null hypothesis RNG A\&B is uniform, Fisher's test, Pearson's test, and $p_{\textrm{threshold}}$ and joint \pvalues $p_{\textrm{joint}}$. The joint \pvalue\ for a set of hypotheses is the probability that for at least one of the hypotheses we observe a \pvalue\ less than $\alpha$ where here $\alpha = 0.05$. This captures the fact that the more hypotheses we test, the more likely it becomes that one of them will fall below the significance threshold. The value $p_{\textrm{threshold}}$ the largest threshold for individual tests for which the joint \pvalue\ for that row is less than 0.05. The local \pvalues\ in the row should this be compared to this number. This captures the fact that when testing multiple hypothesis, the local \pvalues\ of the individual ones actually need to be much smaller for the overall test to be significant. The local \pvalues\ in columns RNG A, RNG B, Fisher and Pearson are exact calculations. The columns RNG A\&B, $p_{\textrm{threshold}}$ and $p_{\textrm{joint}}$ are approximations obtained via $10^5$, $10^4$ and $10^4$ trials of a Monte-Carlo simulation, respectively.}
\end{table*}

We observe that only one local \pvalue\ is below 0.05: Fisher's exact test on the distribution of the settings in the first data set yields a local \pvalue\ of 0.029. However, as described in the next subsection below, when properly taking the look-elsewhere effect into account this does not result in a significant rejection of the uniform settings hypothesis at the 0.05 level. Finally, the valid Bell trials of the first and second dataset combined, shown in the last row of Table~\ref{tbl:pvalues_settings}, are also consistent with uniformly chosen input settings.

\subsection*{Significance and look-elsewhere effect}
\label{sec:LEE}
We now analyse the significance of the local \pvalues\ in Table~\ref{tbl:pvalues_settings} by taking into account the look-elsewhere effect. Say we are looking for correlations between parameters that are in fact completely independent. Looking at one correlation, it is as if we take one random sample from a distribution; the probability that it is at 2 sigma or more extreme is thus about 0.05. If we look for 4 different correlations (assuming all parameters are independent), it is similar to taking 4 random samples, and thus the probability that at least one is at 2 sigma or more extreme is $1-(1-0.05)^4=0.18$. In reverse, assuming fully independent hypotheses, the local \pvalue~$p'$ should have obeyed roughly $1-(1-p')^4<0.05$, so $p'< 0.013$, to be statistically significant at the 0.05 level.

In our case it is actually more complicated because there can be dependencies between hypotheses. We can numerically get some of these numbers. For instance, we have simulated the random number generation (RNG) using Monte-Carlo under the assumption of independent uniform outputs and calculated local \pvalues\ for the four hypotheses listed above. The probability that at least one of these yield local \pvalue~$p' < 0.05$ turns out to be about 0.13 for the 245 events in the Bell test. This is different from $1-(1-0.05)^4=0.18$ because of correlations between the tests, but it is clearly much higher than 0.05. In reverse, to arrive at an overall probability of 0.05 of finding at least one test yielding local \pvalue~$p' < p_\mathrm{threshold}$ for the data in the first Bell dataset, we find $p_\mathrm{threshold}=0.021$. In other words, if we would only be looking at the settings corresponding to the valid Bell trials, then a local \pvalue\ below 0.021 would signal a statistically significant violation of our hypothesis at the 0.05 level. We do not find such evidence for the valid Bell trial data (see first row in Table~\ref{tbl:pvalues_settings}).

The last column gives the probability that at least one of the hypothesis tests on the data in that row yields a local \pvalue~$p'< 0.05$, given uniform settings. In the one-but-last column we give $p_\mathrm{threshold}$, again only for the data set in that row, for a significance at the 0.05 level. These values assume that we would only be testing our hypotheses on that particular row. Since we are now looking at different rows, $p_\mathrm{threshold}$ for each row is a strict upper bound to the $p_\mathrm{threshold}$ for the full table, as we are looking at different cross-sections of the raw data set at the same time; the $p_\mathrm{threshold}$ for the full table will thus be lower but it is not trivial to compute this, given the large dependence between the subsets of data used for each row. However, since we do not find any local \pvalue\ to be below $p_\mathrm{threshold}$ for the corresponding row, we can conclude that the data does not allow rejection of the settings independence hypothesis, even without calculating the global $p_\mathrm{threshold}$ for the full Table.

\section*{Refined analysis of imperfect random number generators}
\label{sec:refined_lhvms}
Ideally, the RNGs yield a fully unpredictable random bit in every trial of the Bell test. A deviation from the ideal behaviour can be denoted by an excess predictability or bias $b$, that can take on values between 0 and $\frac{1}{2}$. In principle the value of $b$ can be different in every trial of a Bell test, which can be modelled by some probability distribution over the value of $b$. By characterising the physical RNGs, we can hope to learn something about the \emph{mean} $\tau$ of this probability distribution. As a particular example of an underlying probability distribution for the bias, consider the case where the random bit is perfectly predictable ($b=\frac{1}{2}$) with probability $f$ and perfectly unpredictable ($b=0$) with probability $1-f$ . This example could model a scenario where the random numbers are generated with some spread in time such that some of them are produced so early that they could be known by the other party before the end of the trial.

A recent analysis of the effect of partial predictability of RNGs on the bound of the CH-Eberhard inequality revealed a strong dependence on the interpretation of the mean excess predictability\cite{Kofler2014Requirements}, estimated from characterisation of the RNGs. In particular, for a model in which the mean excess predictability $\epsilon$ is distributed (evenly) over all trials, the CH-Eberhard inequality can be violated even if the relevant Bell parameter $J$ (which can be viewed as an average violation per trial in terms of probabilities) is much lower than $\epsilon$. On the other hand, Kofler \emph{et. al.}\cite{Kofler2014Requirements} found that in case of an all-or-nothing scenario, such that in a fraction $\epsilon$ of the trials the RNG is fully predictable and in the rest of the trials fully unpredictable, the threshold value for a violation is roughly given by $J>\epsilon$.

Motivated by these findings, we generalize here the analysis of the effect of imperfect random number generators on the winning probability per trial in the CHSH game. We extend the analysis in the Supplementary Information of \firstrunpaper\ (see also Elkouss and Wehner\cite{Elkouss2015Nearly}) to the case where any bias $b$ is produced by an arbitrary underlying probability distribution per trial. That is, there is no maximum bias, but rather the bias can probabilistically take on any value. We find that in our case, as long as the event-ready signal is independent of the random bits, the only relevant parameter is the mean $\tau$ of the bias; the concrete form of the random variable has no impact on the bound on the probability of winning CHSH. 

In the analysis below we explicitly take into account the possibility of early production of random bits, which we expect to be a particular interpretation of the probability distribution over $b$ as above. Indeed we find that when the random bits are perfectly predictable with probability $f$, and perfectly unpredictable with probability $1-f$, then a distribution over the bias $b$ with a mean of $\tau=f/2$ links the two viewpoints of the analysis.

\begin{figure}
\centering
\includegraphics[width=83mm]{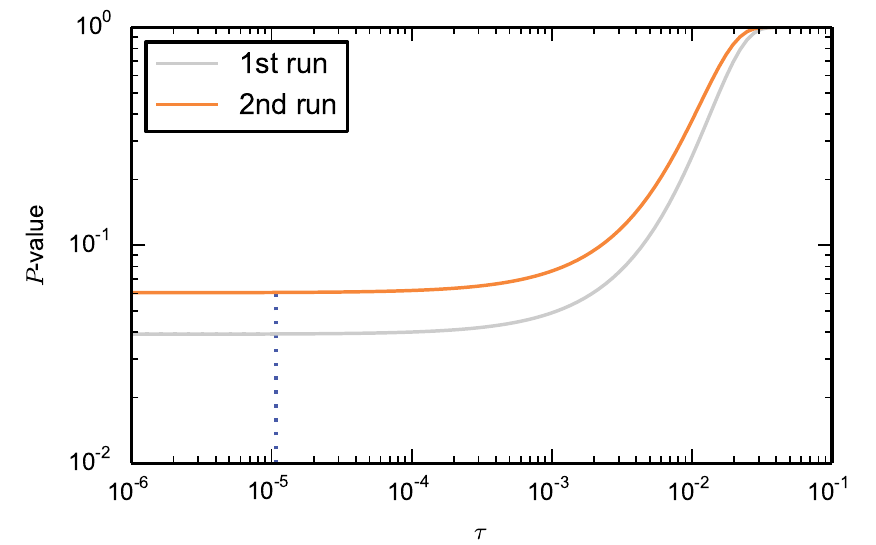}
\caption{\label{fig:fchsh} The \pvalue\ of the two runs as a function of $\tau$, the mean bias of the RNG.}
\end{figure}
 
In order to make the discussion precise, in the following we describe the random variables that characterize the experiment, then make a rigorous derivation of the winning probability.

\subsection*{Properties of the tested \LHVMs}
\label{sec:lhvmdesc}
We introduce the following sequences of random variables. The notation and arguments borrow from earlier work\cite{Brunner2014Bell,Bierhorst2014Rigorous,Bierhorst2015Robust,Elkouss2015Nearly}. Let  
$\mathbf{A}^{\N} = (A_\ellInd)_{\ellInd=1}^\N,\mathbf{B}^{\N} = (B_\ellInd)_{\ellInd=1}^\N$ the outputs of the boxes where $\ellInd$ is used to label the $\ellInd$-th element,  
 $\mathbf{\HHTot}^{\N} = (\HHTot_\ellInd)_{\ellInd=1}^\N$ the histories of attempts previous to the $\ellInd$-th attempt, $\mathbf{C}^{\N} = (C_\ellInd)_{\ellInd=1}^\N$ denotes the scores at each \TT\ and $\mathbf{T}^{\N} = (T_\ellInd)_{\ellInd = 1}^{\N}$ is the sequence of event-ready signals in the case of an event-ready experiment.  In an event-ready experiment, we make no assumptions regarding the statistics of the event-ready station, which may be under full control of the \LHVM, and can depend arbitrarily on the history of the experiment. 

We introduce three sequences of random variables to model each RNG. Let $\mathbf{X}^{\N} = (X_\ellInd)_{\ellInd=1}^\N,\mathbf{Y}^{\N} = (Y_\ellInd)_{\ellInd=1}^\N$ denote the inputs to the boxes. Let $\mathbf{\Fx}^{\N} = (\Fx_\ellInd)_{\ellInd=1}^\N,\mathbf{\Fy}^{\N} = (\Fy_\ellInd)_{\ellInd=1}^\N$ denote two sequences of binary variables that take value 1 if the random number was generated so early that signaling is possible and 0 otherwise. We call the former an early number and the latter a on-time number. Finally, let  
$\mathbf{\Bx}^{\N} = (\Bx_\ellInd)_{\ellInd=1}^\N,\mathbf{\By}^{\N} = (\By_\ellInd)_{\ellInd=1}^\N$ take values in the range $[-1/2,1/2]$ and denote the bias of the random number generators at each attempt. We here assume that these distributions can differ for all $\ellInd$, they do not depend on the history $\HHTot_\ellInd$. Using more involved notation, the same bound can be made if their mean is known conditioned on the history.

The random variable $\HHTot_\ellInd$ models the state of the experiment prior to the measurement. As such, $\HHTot_\ellInd$ includes 
any hidden variables, sometimes denoted using the letter $\lambda$ \cite{Brunner2014Bell}. It also includes 
the history of all possible configurations of inputs and outputs of the prior \TTs\ $(X_j, Y_j, A_j, B_j, T_j)_{j=1}^{\ellInd-1}$.

The null hypothesis (to be refuted) is that our experimental setup can be modelled using a \LHVM. LHVMs verify the following conditions:

\begin{enumerate}
\item \emph{Independent random number generators.}
Conditioned on the history of the experiment the random numbers are independent of each other
\begin{align}
\forall \ellInd, X_\ellInd, \Bx_\ellInd, \Fx_\ellInd \indep Y_\ellInd,\By_\ellInd,\Fy_\ellInd\mid H_\ellInd\ ,
\end{align}
and of the output of the event-ready signal
\begin{equation}
\forall \ellInd, X_\ellInd, \Bx_\ellInd, \Fx_\ellInd \indep T_\ellInd, Y_\ellInd,\By_\ellInd,\Fy_\ellInd \indep T_\ellInd \mid H_\ellInd\ .
\end{equation}
We allow $X_\ellInd$ and $Y_\ellInd$ to be partially predictable given the history of the experiment. The predictability is governed by some random variables $\Bx_\ellInd,\By_\ellInd$. For $b_x\in\Bx_\ellInd,b_y\in \By_\ellInd$ we have 
\begin{align}
\forall (\ellInd,x_\ellInd,h_\ellInd), \frac{1}{2} - b_x \leq \pr{X_\ellInd = x_\ellInd|H_\ellInd = h_\ellInd,\Bx_\ellInd=b_x,\By_\ellInd=b_y} &\leq \frac{1}{2} + b_x\ , \label{eq:AliceBias}\\
\forall (\ellInd,y_\ellInd,h_\ellInd), \frac{1}{2} - b_y \leq \pr{Y_\ellInd=y_\ellInd|H_\ellInd = h_\ellInd,\Bx_\ellInd=b_x,\By_\ellInd=b_y} &\leq \frac{1}{2} + b_y\ ,\label{eq:BobBias}
\end{align}
Furthermore, from the characterization of the devices we have that for all $\ellInd$: $\E{\Fx_\ellInd}\leq f_A,\E{\Bx_\ellInd}\leq\tau_A,\E{\Fy_\ellInd}\leq f_B,\E{\By_\ellInd}\leq\tau_B$. 
We define $f=\max\{f_A,f_B\}$ and $\tau = \max\{\tau_A,\tau_B\}$.
\item \emph{Locality.}
The outputs $a_i$ and $b_i$ only depend on the local input settings and history: they are independent of each other and of the input setting at the other 
side, conditioned on the previous history, the current event-ready signal and the inputs being generated on-time
\begin{equation}
\label{eq:locality}
\forall \ellInd, (X_{\ellInd},A_\ellInd) \indep (Y_{\ellInd},B_\ellInd)| \Fx_\ellInd=0,\Fy_\ellInd=0,\HH_\ellInd, T_\ellInd\ .
\end{equation}
\item \emph{Sequentiality of the experiments.}
Every one of the $\NNN$ \TTs\ takes place sequentially such that any possible signalling between different \TTs\ beyond the previous conditions is prevented \cite{Barrett2002Quantum}. 
\end{enumerate}
Except for these conditions the variables might be correlated in any possible way.

\subsection*{Winning probability for imperfect random number generators}
\label{sec:winLoseProofs}

Here, we derive a tight upper bound on the winning probability of CHSH with imperfect random number generators in an event-ready setup. 
For CHSH, the inputs $X_i,Y_i$, outputs $A_i,B_i$ and output of the heralding station $T_i$ take values 0 and 1. If $T_i=0$ the scoring variable $C_i$ takes always the value zero, if $T_i=1$ then $C_i=1$ when $x\cdot y = a\oplus b$ and $C_i=0$ in the remaining cases. We will take the RNGs to have maximum advantage $f$ of producing early random numbers. 

\begin{lemma}
\label{lem:tool}
Let $m \in \mathbb{N}$, and let a sequence of random variables as described in the previous section correspond with $m$ \TTs\ of a CHSH heralding experiment. 
Suppose that the null hypothesis holds, i.e., nature is governed by an \LHVM. Given that for all $\ellInd\leq m$: $\E{\Fx_\ellInd} = f,\E{\Bx_\ellInd}\leq\tau,\E{\Fy_\ellInd} = f,\E{\By_\ellInd}\leq\tau$, we have for $i\leq m$, any possible history $\HH_\ellInd =\h_\ellInd$ of the experiment, and $T_\ellInd = 1$ that the probability of $C_\ellInd=1$ is upper bounded by
\begin{equation}
\pr{C_\ellInd=1|\HH_\ellInd = \h_\ellInd, T_\ellInd = 1}\leq \bellwin^1\ ,
\end{equation}
where $\bellwin^1=2f-f^2+(1-f)^2 \left(\frac{3}{4}+\tau'-\tau'^2 \right)$ and  $\tau':=\min\{\frac{2\tau+f}{2(1-f)},\frac{1}{2}\}$.
\end{lemma}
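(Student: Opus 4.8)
The plan is to fix a single \TT\ $\ellInd \le m$, condition throughout on a fixed history $\HH_\ellInd = \h_\ellInd$ and on $T_\ellInd = 1$, and decompose the winning event according to whether either random number was produced early. Concretely, I would partition on the pair of early indicators $(\Fx_\ellInd,\Fy_\ellInd)$ into the event $\mathcal{E}=\{\Fx_\ellInd=1 \textrm{ or } \Fy_\ellInd=1\}$ (``at least one early'') and its complement $\{\Fx_\ellInd=0,\Fy_\ellInd=0\}$ (``both on-time''). Using the independence assumptions of condition~1 --- in particular that $\Fx_\ellInd$ and $\Fy_\ellInd$ are independent of each other and of $T_\ellInd$ given $\HH_\ellInd$, and that their laws do not depend on the history --- I would establish $\pr{\Fx_\ellInd=1\mid\HH_\ellInd=\h_\ellInd,T_\ellInd=1}=\E{\Fx_\ellInd}=f$ and likewise for $\Fy_\ellInd$, so that $\pr{\mathcal{E}\mid\h_\ellInd,T_\ellInd=1}=1-(1-f)^2=2f-f^2$ and $\pr{\Fx_\ellInd=0,\Fy_\ellInd=0\mid\h_\ellInd,T_\ellInd=1}=(1-f)^2$. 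These are exactly the two prefactors appearing in $\bellwin^1$.

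On the event $\mathcal{E}$ I would simply bound the conditional winning probability by $1$, since once a number is produced early the locality constraint no longer applies and the \LHVM\ can in principle always meet the CHSH condition; this branch contributes at most $2f-f^2$. The substance of the proof is the ``both on-time'' branch. Here I would invoke the locality assumption~\eqref{eq:locality}, which under the conditioning $\Fx_\ellInd=\Fy_\ellInd=0$ makes $(X_\ellInd,A_\ellInd)$ independent of $(Y_\ellInd,B_\ellInd)$, and, by linearity of the winning probability over local behaviours, reduce to deterministic local strategies. For any such deterministic strategy the standard CHSH parity identity shows the four winning conditions cannot be satisfied simultaneously, so the strategy fails on at least one input pair; the adversary's best option is to fail on exactly one, whence the conditional winning probability equals $1-\pr{\textrm{losing input pair}}$. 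Exploiting the input bias bounds~\eqref{eq:AliceBias}--\eqref{eq:BobBias}, and using that $X_\ellInd$ and $Y_\ellInd$ remain independent across the two sides after conditioning, the adversary makes the losing pair as unlikely as the bias permits, giving, for realized biases $b_x,b_y$, the bound $1-(\tfrac12-b_x)(\tfrac12-b_y)$.

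It then remains to average this over the bias variables conditioned on both numbers being on-time. By the cross-side independence the bilinear expression factorizes, so the conditional winning probability is at most $1-(\tfrac12-\E{\Bx_\ellInd\mid\Fx_\ellInd=0})(\tfrac12-\E{\By_\ellInd\mid\Fy_\ellInd=0})$, and monotonicity reduces everything to an upper bound on the \emph{effective on-time bias} $\E{\Bx_\ellInd\mid\Fx_\ellInd=0}$. This is the crux, and I would bound it by a budget argument using $\E{\Bx_\ellInd}\le\tau$, $\E{\Fx_\ellInd}=f$ and the range $\Bx_\ellInd\in[-\tfrac12,\tfrac12]$. Writing $\E{\Bx_\ellInd}=f\,\E{\Bx_\ellInd\mid\Fx_\ellInd=1}+(1-f)\,\E{\Bx_\ellInd\mid\Fx_\ellInd=0}$ and lower-bounding $\E{\Bx_\ellInd\mid\Fx_\ellInd=1}\ge-\tfrac12$ gives $\E{\Bx_\ellInd\mid\Fx_\ellInd=0}\le\frac{\tau+f/2}{1-f}=\frac{2\tau+f}{2(1-f)}$; capping at the maximal admissible bias $\tfrac12$ yields exactly $\tau'$. (Tightness comes from an adversary that places the extreme value $-\tfrac12$ on the early trials, where it is irrelevant since those are already won.) Substituting $\E{\Bx_\ellInd\mid\Fx_\ellInd=0},\E{\By_\ellInd\mid\Fy_\ellInd=0}\le\tau'$ gives $1-(\tfrac12-\tau')^2=\tfrac34+\tau'-\tau'^2$ on the on-time branch, and combining the two branches reproduces $\bellwin^1$.

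The main obstacle I anticipate is the measure-theoretic bookkeeping around the conditioning rather than any single inequality: I must verify that each independence used --- cross-side independence of inputs and outputs, independence of the early and bias variables from $T_\ellInd$, and the locality factorization --- genuinely survives simultaneous conditioning on $\HH_\ellInd=\h_\ellInd$, on $T_\ellInd=1$, and on the realized values of $\Fx_\ellInd,\Fy_\ellInd$. The second delicate point is the bias-budget step: one must argue the winning bound is monotone in the effective bias, so that pushing all unavoidable negative bias onto the early trials is optimal for the adversary while respecting the $[-\tfrac12,\tfrac12]$ range; this is precisely what produces the $+f$ in the numerator of $\tau'$ and makes the bound tight. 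The remaining CHSH counting and the final algebra are routine.
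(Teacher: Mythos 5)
Your overall architecture is the same as the paper's: the decomposition over $(\Fx_\ellInd,\Fy_\ellInd)$ with weights $2f-f^2$ and $(1-f)^2$, the trivial bound of $1$ on the early branch, the reduction to deterministic local strategies on the on-time branch, and the closing algebra all mirror the paper's proof. The gap is in the step you yourself call the crux. You invoke the bias bounds \eqref{eq:AliceBias}--\eqref{eq:BobBias} \emph{after} conditioning on $\Fx_\ellInd=\Fy_\ellInd=0$, i.e.\ you assume $\pr{X_\ellInd=x\mid \HH_\ellInd=\h_\ellInd,\Bx_\ellInd=b_x,\By_\ellInd=b_y,\Fx_\ellInd=0}\le\frac12+b_x$, and from there derive the intermediate bound $1-\left(\frac12-\E{\Bx_\ellInd\mid\Fx_\ellInd=0}\right)\left(\frac12-\E{\By_\ellInd\mid\Fy_\ellInd=0}\right)$ on the on-time winning probability. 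But \eqref{eq:AliceBias} is only assumed conditional on $\HH_\ellInd,\Bx_\ellInd,\By_\ellInd$; it does \emph{not} survive the extra conditioning on $\Fx_\ellInd=0$, because the \LHVM\ may correlate $X_\ellInd$ with $\Fx_\ellInd$ directly while leaving the bias variable untouched. Concretely, take $\Bx_\ellInd=\By_\ellInd\equiv 0$ (so the hypothesis holds with $\tau=0$ and $\pr{X_\ellInd=1\mid\HH_\ellInd=\h_\ellInd}=\frac12$ is forced), $\E{\Fx_\ellInd}=\E{\Fy_\ellInd}=f\le\frac12$, and let $\pr{X_\ellInd=1\mid\Fx_\ellInd=1}=0$ and $\pr{X_\ellInd=1\mid\Fx_\ellInd=0}=\frac{1}{2(1-f)}$, likewise for $Y_\ellInd$; with the deterministic strategy $a(x)=x$, $b(y)=y\oplus1$, which loses only on input pair $(0,0)$, the on-time winning probability equals $1-\left(1-\frac{1}{2(1-f)}\right)^2=\frac34+\tau'-\tau'^2>\frac34$, whereas your intermediate inequality asserts it is at most $1-\left(\frac12-0\right)^2=\frac34$. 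So the inequality you reduce everything to is false: your budget argument controls $\E{\Bx_\ellInd\mid\Fx_\ellInd=0}$, which is not the quantity that governs the game.

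The repair is exactly what the paper does: run the budget argument on the input probabilities themselves rather than on the bias variable. Integrating \eqref{eq:AliceBias} over $\Bx_\ellInd$ gives the unconditional bound $\frac12-\tau\le\pr{X_\ellInd=1}\le\frac12+\tau$, and expanding $\pr{X_\ellInd=1}=\pr{\Fx_\ellInd=1}\pr{X_\ellInd=1\mid\Fx_\ellInd=1}+\pr{\Fx_\ellInd=0}\pr{X_\ellInd=1\mid\Fx_\ellInd=0}\ge(1-f)\pr{X_\ellInd=1\mid\Fx_\ellInd=0}$ yields $\pr{X_\ellInd=1\mid\Fx_\ellInd=0}\le\frac{1/2+\tau}{1-f}=\frac12+\frac{2\tau+f}{2(1-f)}$, i.e.\ an effective on-time predictability of $\tau'$; this covers the counterexample above, where the skew comes from correlation with $\Fx_\ellInd$ rather than from $\Bx_\ellInd$. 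The algebra of this corrected budget is identical to yours, which is why your final formula coincides with $\bellwin^1$; but as written your proof establishes the bound only for the subclass of models whose on-time skew is carried entirely by the bias variables. Note also that this is not the ``measure-theoretic bookkeeping'' your closing paragraph anticipates: the failure of \eqref{eq:AliceBias} to survive conditioning on $\Fx_\ellInd=0$ is not a technicality to be verified but the central phenomenon the inflated parameter $\tau'$ exists to capture.
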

\begin{proof}
Let us first bound the effect of the early random numbers in the winning probability. We have
\begin{align}
\pr{C_\ellInd=1|\HH_\ellInd = \h_\ellInd,T_\ellInd=1} &= \sum_{s_x,s_y\in\{0,1\}}\pr{\Fx_\ellInd=s_x,\Fy_\ellInd=s_y}\pr{C_\ellInd=1|\HH_\ellInd = \h_\ellInd,T_\ellInd=1,\Fx_\ellInd=s_x,\Fy_\ellInd=s_y} \\
&\leq \pr{\Fx_\ellInd=0,\Fy_\ellInd=1}+\pr{\Fx_\ellInd=1,\Fy_\ellInd=0}
+\pr{\Fx_\ellInd=1,\Fy_\ellInd=1}\nonumber\\
&\quad +\pr{\Fx_\ellInd=0,\Fy_\ellInd=0}\pr{C_\ellInd=1|\HH_\ellInd = \h_\ellInd,T_\ellInd=1,\Fx_\ellInd=s_x,\Fy_\ellInd=s_y}\\
&\leq 2f-f^2+(1-f)^2\pr{C_\ellInd=1|\HH_\ellInd = \h_\ellInd,T_\ellInd=1,\Fx=0,\Fy=0}
\end{align}
The first inequality follows by assuming that the CHSH is  won with probability one when a random number is early. The second inequality follows from assuming that $\E{\Fx_\ellInd}=\E{\Fy_\ellInd}= f$.

Let us now bound the bias for the on-time numbers.
We focus on the random numbers $X_\ellInd$; the same argument can be made for $Y_\ellInd$. For simplicity, we omit the explict conditioning on $\HH_\ellInd 
= \h_\ellInd$.
First of all, note that since
\begin{equation}
\pr{X_\ellInd=1} = \int db_x \pr{\Bx_\ellInd = b_x} 
\pr{X_\ellInd=1| \Bx_\ellInd=b_x}
\end{equation}
we have together with~\eqref{eq:AliceBias} that
\begin{equation}
\frac{1}{2}-\mathbb{E}[\Bx_\ellInd] \leq \pr{X_\ellInd=1}\leq \frac{1}{2}+\mathbb{E}[\Bx_\ellInd]\label{eq:bound2}
\end{equation}
which implies $1/2-\tau\leq \pr{X_\ellInd=1}\leq 1/2+\tau$. 
Furthermore, note that we can expand the probability as
\begin{equation}
\pr{X_\ellInd=1}=\pr{\Fx_\ellInd=1}\pr{X_\ellInd=1|\Fx_\ellInd=1}+\pr{\Fx_\ellInd=0}\pr{X_\ellInd=1|\Fx_\ellInd=0}\label{eq:bound1}\ .
\end{equation}
Combining \eqref{eq:bound1}, \eqref{eq:bound2} and the assumption
$\mathbb{E}(\Fx_\ellInd) = f$ we obtain
\begin{equation}
 \pr{X_\ellInd=1|\Fx_\ellInd=0}\leq \frac{1}{2}+\tau'
\end{equation}
where $\tau':=\min\{\frac{2\tau+f}{2(1-f)},\frac{1}{2}\}$.
Let us now expand the probability that $C_\ellInd=1$ conditioned on the event that both numbers were on-time. For simplicity, we drop 
the explicit conditioning on $\HH_\ellInd = \h_\ellInd,T_\ellInd=1,\Fx=0,\Fy=0$. 
\begin{align}
\pr{C_\ellInd=1} &= \sum_{\substack{{x,y,z\in\{0,1\}}\\(x,y)\neq(1,1)}}
\pr{A_\ellInd=z,B_\ellInd=z,X_\ellInd=x,Y_\ellInd=y}+\sum_{\substack{z\in\{0,1\}}}\pr{A_\ellInd=z,B_\ellInd=z\oplus 1,X_\ellInd=1,Y_\ellInd=1}\ .\label{eq:probexpansion1}
\end{align}
We can break these probabilities into simpler terms
\begin{align}
\p\big(A_\ellInd=a,B_\ellInd=b,X_\ellInd=x,Y_\ellInd=y) &=\pr{A_\ellInd=a,X_\ellInd=x}\pr{B_\ellInd=b,Y_\ellInd=y}\\
&=\pr{X_\ellInd=x}\pr{A_\ellInd=a|X_\ellInd=x} \pr{Y_\ellInd=y} \pr{B_\ellInd=b|Y_\ellInd=y}\ . 
\end{align}
The first equality followed by the locality condition, the second one simply by the definition of conditional probability.
With this decomposition, we can express \eqref{eq:probexpansion1} as
\begin{align}
\pr{C_\ellInd=1} &= \sum_{\substack{{x,y\in\{0,1\}}\\(x,y)\neq(1,1)}}\alpha_{x}\beta_{y}\left(\chi_{x}\gamma_{y}+(1-\chi_{x})(1-\gamma_{y})\right)+\alpha_{1}\beta_{1}\left(\chi_{1}(1-\gamma_{1})+(1-\chi_{1})\gamma_{1}\right)\\
                                                                &= \sum_{x,y\in\{0,1\}} \alpha_{x}\beta_{y}f_{x,y}\ . \label{eq:compactsum}
\end{align}
where we have used the shorthands
\begin{align}
\chi_{x}&:=\pr{A_\ellInd=1}\ , \\
\gamma_{y}&:=\pr{B_\ellInd=1},\\
\alpha_{x}&:=\pr{X_\ellInd=x}\ ,\\
\beta_{y'}&:=\pr{Y_\ellInd=y}\ ,\\
f_{x,y} &:= 
\left\{ 
\begin{aligned}
\chi_{x}\gamma_{y}+(1-\chi_{x})(1-\gamma_{y}) & \qquad\textrm{ if } (x,y)\neq (1,1)\ , \\
\chi_{x}(1-\gamma_{y})+(1-\chi_{x})\gamma_{y} & \qquad\textrm{ otherwise.}
\end{aligned}
\right.
\end{align}
Now we will expand \eqref{eq:compactsum}. We know that $1/2 -\tau\leq\alpha_{x},\beta_{y}\leq 1/2 +\tau$. In principle, $\tau$ does not need to take the values in the extreme on the range. Without loss of generality let $\alpha_{0}=1/2+\tau_A$ and $\beta_0=1/2+\tau_B$, with $\tau_A,\tau_B\in[-1/2,1/2]$.
\begin{align}                                                                
\sum_{x,y\in\{0,1\}} \alpha_{x}\beta_{y}f_{x,y}  &= \left(\frac{1}{2}+\tau_{A}\right)\left(\frac{1}{2}+\tau_{B}\right)f_{0,0} + \left(\frac{1}{2}+\tau_{A}\right)\left(\frac{1}{2}-\tau_{B}\right)f_{0,1}\nonumber\\
                                                                &\qquad+\left(\frac{1}{2}-\tau_{A}\right)\left(\frac{1}{2}+\tau_{B}\right)f_{1,0}+\left(\frac{1}{2}-\tau_{A}\right)\left(\frac{1}{2}-\tau_{B}\right)f_{1,1}\\
                                                                &= \left(\frac{1}{4}+\frac{1}{2}\tau_{A}+\frac{1}{2}\tau_{B}+\tau_{A}\tau_{B}\right)f_{0,0} + \left(\frac{1}{4}+\frac{1}{2}\tau_{A}-\frac{1}{2}\tau_{B}-\tau_{A}\tau_{B}\right)f_{0,1}\nonumber\\
                                                                &\qquad+\left(\frac{1}{4}-\frac{1}{2}\tau_{A}+\frac{1}{2}\tau_{B}-\tau_{A}\tau_{B}\right)f_{1,0}+ \left(\frac{1}{4}-\frac{1}{2}\tau_{A}-\frac{1}{2}\tau_{B}+\tau_{A}\tau_{B}\right)f_{1,1}\\
&= \left(\tau_{A}+\tau_{B}\right)f_{0,0} + \left(\tau_{A}-2\tau_{A}\tau_{B}\right)f_{0,1}\nonumber\\
                                                                &\qquad+\left(\tau_{B}-2\tau_{A}\tau_{B}\right)f_{1,0}+ \left(\frac{1}{4}-\frac{1}{2}\tau_{A}-\frac{1}{2}\tau_{B} +\tau_{A}\tau_{B}\right)\sum_{a,b}f_{a,b}\ .\label{eq:FRemain}
\end{align}                                                                
It thus remains to bound the sum of $f_{x,y}$. Note that we can write
\begin{align}
\sum_{x,y\in\{0,1\}} f_{x,y} &= \left(\chi_{0}\gamma_{0}+(1-\chi_{0})(1-\gamma_{0})\right)
                                                               + \left(\chi_{0}\gamma_{1}+(1-\chi_{0})(1-\gamma_{1})\right) \nonumber\\
                                                               &\qquad + \left(\chi_{1}\gamma_{0}+(1-\chi_{1})(1-\gamma_{0})\right)
                                                               + \left(\chi_{1}(1-\gamma_{1})+(1-\chi_{1})\gamma_{1}\right) \\
&= \chi_0 \left(\gamma_0 + \gamma_1\right) + \left(1-\chi_0\right) \left(2 - \gamma_0 - \gamma_1\right)\nonumber\\
&\qquad + \chi_1 \left(\gamma_0 + 1 - \gamma_1\right) + \left(1- \chi_1\right)\left(1 - \gamma_0 + \gamma_1\right)\label{eq:updelta}\ . 
\end{align}
Since \eqref{eq:updelta} is a sum of two convex combinations, it must take its maximum value at one of the extreme points, that is with $\chi_{0}\in\{0,1\}$ and $\chi_{1}\in\{0,1\}$. We can thus consider all four combinations of values for $\chi_0$ and $\chi_1$ given by
\begin{align}
\sum_{x,y\in\{0,1\}} f_{x,y} =
\left\{
\begin{aligned}
3-2\gamma_{0} &\textrm{ if }(\chi_{0},\chi_{1})=(0,0)\ ,\\ 
3-2\gamma_{1} &\textrm{ if }(\chi_{0},\chi_{1})=(0,1)\ ,\\
1+2\gamma_{1} &\textrm{ if }(\chi_{0},\chi_{1})=(1,0)\ ,\\
1+2\gamma_{0} &\textrm{ if }(\chi_{0},\chi_{1})=(1,1)\ .
\end{aligned}
\right.
\end{align}
Since $0 \leq \gamma_0,\gamma_1 \leq 1$, we have in all cases that the sum is upper bounded by $3$. 

Now, using~\eqref{eq:FRemain} we have
\begin{align}
\pr{C_\ellInd=1} &\leq  2\left(\tau_{A}+\tau_{B}-2\tau_{A}\tau_{B}\right)+3\left(\frac{1}{4}-\frac{1}{2}\tau_{A}-\frac{1}{2}\tau_{B} +\tau_{A}\tau_{B}\right)\\
                                                                &=\frac{3}{4}+\frac{1}{2}(\tau_A+\tau_B)-\tau_A\tau_B\label{eq:lotsoftau}\\
                                                                &\leq\frac{3}{4}+\tau'-\tau'^2 
\end{align}
where in the first inequality we bound $f_{0,0},f_{0,1},f_{1,0}$ by 1. The second inequality follows since $\tau'\leq 1/2$ and for $\tau_A,\tau_B$ below $1/2$ \eqref{eq:lotsoftau} is strictly increasing both in $\tau_A$ and $\tau_B$; this implies that the maximum is found in the extreme: $\tau=\tau_A=\tau_B$.

Finally, we can plug the bound in the winning probability given that both numbers were on time into the winning probability and we obtain:
\begin{align}
\label{eqn:final_result_ftau}
\pr{C_\ellInd=1} &\leq 2f-f^2+(1-f)^2 \left(\frac{3}{4}+\tau'-\tau'^2 \right)
&=\frac{3}{4} + \frac{1}{2}f - \frac{1}{4}f^2 + \tau -\tau^2 - 2 f \tau + f^2 \tau + 2 f \tau^2 - f^2 \tau^2.
\end{align}
\end{proof}
Equation (\ref{eqn:final_result_ftau}) shows the equal footing of $\frac{1}{2}f$ and $\tau$. This result highlights the fact that early production of random numbers is just a particular distribution underlying the bias of the random number generators, where the probability $f$ of producing an early number corresponds to a mixture of completely predictable numbers ($\tau=\frac{1}{2}$) and the probability $1-f$ to unpredictable numbers.

The finding that the only relevant RNG parameter for the winning probability of the CHSH game is the mean bias makes a Bell test based on this winning probability particularly robust. In our two Bell test runs, we find a violation in terms of the CHSH winning probability of about 0.05 and 0.04 respectively, both orders of magnitude larger than the mean bias ($<10^{-4}$), and, given our theory result, independent of the underlying distribution of bias over the trials. As depicted in Fig.~\ref{fig:fchsh}, this means for instance that for a fraction of early produced random numbers of $10^{-3}$ our \pvalues\ are hardly affected. For comparison, in this scenario an experiment using the CH-Eberhard inequality would require $J > 10^{-3}$ to obtain a violation\cite{Kofler2014Requirements}, which is two orders of magnitude beyond the state of the art of photonic experiments\cite{Giustina2015SignificantLoopholeFree,Shalm2015Strong}. This difference may be traced back to the use of event-ready detectors in our experiments, which dramatically increases the fidelity of the entangled state and thus the winning probability per Bell trial. We stress that the above only holds if the event-ready signal is still independent of the early produced random bits: in case the random bits are produced so early that they are not anymore space-like separated from the event-ready signal generation and the event-ready detector could select the Bell trials based on random bits being produced too early. In our Bell tests, this means that the above robustness spans to random bits being produced a maximum of 690~ns too early. 

\section*{Conclusion}
The loophole-free violation of Bell's inequality in the second data run reported here further strengthens the rejection of a broad class of local realistic theories. We find that the data is consistent with independent setting choices, both in the first and second dataset, as well as in the combined dataset. Refined analysis of the effect of a bias in the random number generators shows that only the mean bias plays a role in the winning probability and thus in the \pvalue\ bound for our experiments. 
The large spatial separation and the strong violation in winning probability per trial of about 0.05 makes our implementation promising for future applications of non-locality for device-independent quantum key distribution~\cite{Acin2007DeviceIndependent} and randomness generation~\cite{Colbeck2007Quantum,Pironio2010Random}.

%\bibliography{Remote_adapted} %XXX bbl to be inserted here.
%%%%%%%%%%%%%%%%%%%%%%%%

%%%%%%%%%%%%%%%%%%%%%%%%
\section*{Acknowledgements}

We thank Jan-\r{A}ke Larsson, Morgan Mitchell and Krister Shalm for useful discussions.

We acknowledge support from the Dutch Organization for Fundamental Research on Matter (FOM), Dutch Technology Foundation (STW), the Netherlands Organization for Scientific Research (NWO) through a VENI grant (THT) and a VIDI grant (SW) and the European Research Council through project HYSCORE.

\section*{Author contributions}
Author Contributions B.H. and R.H. devised the experiment. B.H., N.K., A.E.D., A.R., M.S.B., R.F.L.V. and R.N.S. built and characterized the experimental set-up. K.G. compiled the random Twitter datasets. M.M. and D.J.T. grew and prepared the diamond device substrates.  M.S.B. fabricated the devices. B.H., N.K., A.E.D., A.R. and M.S.B. collected and analysed the data, with support from T.H.T. and R.H. D.E. and S.W. performed the theoretical analysis. B.H., D.E., S.W. and R.H. wrote the manuscript. R.H. supervised the project.

\clearpage

\end{document}